\documentclass[11pt]{article}

\pdfpagewidth=8.5in
\pdfpageheight=11in
\textwidth7in
\textheight9in
\oddsidemargin-0.25in
\evensidemargin-0.25in
\topmargin -47pt

\usepackage[small]{titlesec}
\usepackage[small,bf]{caption}

\usepackage{amsfonts}
\usepackage{amsmath}
\usepackage{amssymb}

\usepackage{algorithm}
\usepackage[noend]{algpseudocode}
\usepackage{graphicx}
\usepackage{tabularx}
\usepackage{times}
\usepackage{url}

\newtheorem{theorem}{Theorem}
\newtheorem{lemma}{Lemma}

\newcommand{\sq}{\hbox{\rlap{$\sqcap$}$\sqcup$}}
\newcommand{\qed}{\hspace*{\fill}\sq}
\newenvironment{proof}{\noindent {\bf Proof.}\ }{\qed\par\vskip 4mm\par}

\begin{document}

\begin{titlepage}

\title{
	\textbf{Infinite Object Coating in the Amoebot Model}
}

\author{
	Zahra Derakhshandeh$^1$,
	Robert Gmyr$^2$,
	Andr\'ea W.\ Richa$^1$,\\
	Christian Scheideler$^2$,
	Thim Strothmann$^2$,
	Shimrit Tzur-David$^3$\\
	\\
	$^1$ Department of Computer Science and Engineering,\\
	Arizona State University, USA\\
	\\
	$^2$ Department of Computer Science,\\
	University of Paderborn, Germany\\
	\\
	$^3$ Department of Computer Science,\\
	Ben-Gurion University, Israel
}

\date{}

\maketitle \thispagestyle{empty}

\begin{abstract}
The term \emph{programmable matter} refers to matter which has the ability to change its
physical properties (shape, density, moduli, conductivity, optical properties,
etc.) in a programmable fashion, based upon user input or autonomous sensing.
This has many applications like smart materials,
autonomous monitoring and repair, and minimal invasive
surgery. While programmable matter might have
been considered pure science fiction more than two decades ago, in recent years a large amount of research
has been conducted in this field.
Often programmable matter is envisioned as a very large number of small 
locally interacting computational \emph{particles}.
We propose the Amoebot model, a new model which builds upon this vision of 
programmable matter. Inspired by the behavior of amoeba, the 
Amoebot model offers a versatile framework to model self-organizing
particles and facilitates rigorous algorithmic research in the area of programmable matter.
 We present an algorithm for the problem of coating an
infinite object under this model, and prove the correctness of the algorithm
 and that it is work-optimal.
 
\end{abstract}

\end{titlepage}

\section{Introduction}
Recent advances in microfabrication and cellular engineering foreshadow that in the next few decades it might be
possible to assemble myriads of simple information processing units at
almost no cost. Microelectronic mechanical components have become so inexpensive to
manufacture that one can anticipate integrating logic circuits, microsensors,
and communications devices onto nano-computational components.
 Imagine coating bridges and buildings with smart paint that senses and reports
on traffic and wind loads and monitors structural integrity. A smart-paint
coating on a wall could sense vibrations, monitor the premises for intruders,
and cancel noise. There has also been amazing progress in understanding the
biochemical mechanisms in individual cells such as the mechanisms behind cell
signaling and cell movement \cite{AE07}. 
Recently, it has been demonstrated that, in
principle, biological cells can be turned into finite automata \cite{BPEA+01} or even
pushdown automata \cite{KSP12}, so one can imagine that some day one can tailor-make
biological cells to function as sensors and actuators, as programmable
delivery devices, and as chemical factories for the assembly of nano-scale
structures. 

%Yet fabrication is only part of the story.
% RG: This sentence indicates a big revelation: "We also need to program that stuff!". However that revelation is not there anymore, so that this sentence makes no sense at all now. Therefore, I cut it.
One can envision producing vast quantities of individual microscopic computational particles---whether microfabricated
particles or engineered cells---to form {\em programmable matter}, as coined by Toffoli and Margolous \cite{to91}.
These particles are possibly faulty, sensitive to the environment, and may produce various types
of local actions that range from changing their internal state to communicating with
other particles, sensing the environment, moving to a different location,
changing shape or color, or even replicating. Those individual
local actions may then be used to change
the physical properties, color, and shape of the matter at a global scale. 
 
We propose \emph{Amoebot}, a new amoeba-inspired model for programmable matter\footnote{A preliminary version of our model was presented at the First Biological Distributed Algorithms (BDA) Workshop, co-located with DISC, October 2013, and  has appeard as a Brief Announcement at ACM SPAA 2014~\cite{DolevGRS13, DBLP:conf/spaa/DerakhshandehDGRSS14}.}.
In our model, the programmable matter consists of particles that can bond to neighboring particles
and use these bonds to form connected structures.
Particles only have local information and have modest computational power:
Each particle has only a constant-size memory and behaves similarly to a finite state machine.
The particles act asynchronously and they achieve locomotion by expanding and contracting,
which resembles the behavior of amoeba~\cite{AE07}. 

% RG: I don't think we have to talk about applications of our model; it is a model for programmable matter, building programmable matter is its applications.
% RG: Even though I like the analogy, I don't see how "scratch-wound healing" is an application of our model.
%AR: I may disagree with you. A model is only relevant if it has applications; applications of programmable matter may not be so straightforward to the reviewer, hence it would not hurt to remind some of those. I agree that the scratch-wound healing connection may be a bit loose, but I still found it interesting -- I am ok leaving that one out
%
% Our proposed model has many direct applications in the context of nano-sensor networks or modular nano-robotics,
% and may even extend beyond those: For example in scratch-wound healing, epithelial cells as our skin try to cover the gap caused by a scratch while they stick together and move as a group in an analogous fashion to particles moving in our model \cite{tr09}.
%
%

\subsection{Our Contributions}
\vspace{-2mm}
Our proposed Amoebot model, presented in Section~\ref{sec:model}, offers a versatile framework to model self-organizing
particles and facilitates rigorous algorithmic research in the area of programmable matter. In addition, we present an algorithm for the problem of coating an
infinite object under this model in Section~\ref{sec:algorithm}, and prove the correctness of the algorithm (Theorem~\ref{thm:solve}) and that the algorithm is work-optimal (Theorem~\ref{thm:work}). 

\section{Related Work}
\vspace{-2mm}
While programmable matter may have seemed like 
science fiction more than two decades ago, we have seen many advances in
this field recently.
One can distinguish
between active and passive systems. In passive systems the particles either do not have any
intelligence at all (but just move and bond based on their structural
properties or due to chemical interactions with the environment), or they
have limited computational capabilities but cannot control their
movements. Examples of research on {\em passive systems} are DNA computing \cite{Adl94, BDLS96, CDBG11, DPSS11, NKC03, WLWS98}, tile self-assembly systems in general \cite{st11, st08}, 
population protocols \cite{AAD+06}, and slime molds \cite{BMV12, LTT+10, WTTN11}.
We will not describe these models in detail as they are only of little relevance for our approach.
On the other hand in \emph{active systems}, there are computational
particles that can control the way they act and move in order to solve a specific task.
Self-organizing networks, robotic swarms, and modular robotic systems are 
some examples of active systems.

\textit{Self-organizing networks} have been studied in many different contexts.
% complex networks
Networks that evolve out of local, self-organizing behavior have been heavily studied in the context of
\textit{complex networks} such as small-world networks \cite{BA99, Kle00, WS98}.
However, whereas a common approach for the complex networks field is to study
the global effect of given local interaction rules,
we aim at developing local interaction rules in order to obtain a desired global effect.

In the area of \textit{swarm robotics} it is usually assumed that there is a collection of autonomous robots
that have limited sensing, often including vision, and communication ranges, and that can freely move in a given area. 
%Also a significant number of the studies
%in this area focus on distributed computing by mobile robots where each robot is equipped with {\em vision} 
%(mostly unlimited vision range) and can obtain the positions of all other robots with respect to its local coordinate system. 
They follow a variety of goals, as for example graph exploration (e.g., \cite{fl13}), gathering problems (e.g., \cite{AG3, ci12})
, and shape formation problem (e.g., \cite{fl08}).
Surveys of recent results in swarm robotics can be found in \cite{Ker12, McL08}; other samples of representative work can be found in e.g., ~\cite{FGK10, BFMS11, DFSY10%has unlimited vision
, DS08, CP08%restricted vision 
, Kat05, HABFM02, SY99, FS10, PZ06, KM11, AR10, RS10}.
%Robot swarms promise to be very useful for
%exploration, transportation or construction tasks in areas that are
%inaccessible to humans (e.g., \cite{FS10, PZ06}) or to set up an
%infrastructure in order to enable communication or mining (e.g., \cite{KM11}); also, self-assembly by robot swarms have been studied (e.g., \cite{AR10, RS10}). %NOTE: no vision in these two. Robots here communicate with their neighbors by sending and receiveing messages so we can not mix them with the other shape formation ones.
%; also within this focus, some significant research activities in self-assembly problems can be taken into account like distributed agreement or consensus (e.g., \cite{st11, st08}), shape formation and shape repair problems (e.g., \cite{AR10, RS10}), etc.
%OR: ; furthermore, self-assembly problems have been studied (e.g., \cite{AR10, RS10, st11, st08}).
 Besides work on how to set up robot swarms in order
to solve certain tasks, a significant amount of work has also been invested in
order to understand the global effects of local behavior in natural swarms
like social insects, birds, or fish (see e.g.,~\cite{Cha09, DBLP:1211-1909}).
%. In theory, work on the dynamics of local
%behavior in swarms was pioneered by Chazelle and his co-authors
While the analytical techniques developed in the area of swarm robotics and natural swarms are of some relevance for this work,
the underlying model differs significantly as we do not allow free movement of particles.

% modular robotics
While swarm robotics focuses on inter-robotic aspects in order to perform certain tasks,
the field of \textit{modular self-reconfigurable robotic systems} focuses on intra-robotic aspects
such as the design, fabrication, motion planning, and control of autonomous kinematic machines with variable morphology 
(see e.g., \cite{FNKB88, YSS+07}).
%Since the field started with the development of CEBOT by Toshio Fukuda \cite{FNKB88},
%a growing number of research groups have become actively involved in modular robotics research \cite{YSS+07}.
\textit{Metamorphic robots}  form a subclass of self-reconfigurable robots that shares the characteristics of our model
that all particles are identical and that they fill space without gaps \cite{Chi94}.
The hardware development in the field of self-reconfigurable robotics has been complemented
by a number of algorithmic advances (e.g., \cite{BKRT04, WWA04}),
but so far 
mechanisms that scale to hundreds or thousands of individual units are still under investigation,
and no rigorous theoretical foundation is available yet.
%The research \cite{em13} is an example of another set of studies that are somewhat related to the scope of our
%research where they consider the distributed computing in a network of 
%finite state machines (nano-devices). Among existing differences between those studies and ours, we denote {\em motion} as a %distinguished feature of particles in our model while the computational machines in these studies are not equipped with this feature. 

As in our model, the work in \cite{em13} also assumes that each system particle is a finite state machine with constant-size memory
operating in an asynchronous distributed fashion. However, the work in \cite{em13} assumes a {\em static} network
topology and the problems considered only address what can be {\em computed} in an asynchronous network of randomized constant-size memory finite state machines, and
not how desired topology/shape can be achieved in systems where finite state machine particles can move (in addition to computing).
%As in our model, the work in \cite{em13} also assumes that each system particle is a finite state machine 
%operating in an asynchronous distributed fashion. However, the work in \cite{em13} assumes a {\em static} network
%topology and the problems considered only address what can be {\em computed} in an asynchronous network of randomized finite state machines, and not how desired topology/shape can be achieved in systems where finite state machine particles can move.
%To conclude, although the advances in the area of swarm robotics, modular self-reconfigurable robotic systems, and other related areas, are relevant for the feasibility of our model, however either
% the proposed models are not applicable to our scenario (since they allow free movement
%of particles in space i.e., particles don't rely on engaging
%and releasing bonds to other particles to move while it is required in our model. Also they use vision to get some global information out of the system while in our problem particles can only interact locally and are unable of obtaining some global information like the total number of the particles in the system or the relative positions of all the particles which are not located in  their immediate neighborhood, etc) 
%or they lack a formal presentation that allows rigorous 
%algorithmic and analytical research that scale to hundreds or thousands of individual units. 
 
The \emph{nubot} model~\cite{winfree13}, by Woods et al., was developped independently to our model (our model was originally presented at the First Workshop on Biological Distributed Algorithms (BDA), October 2013~\cite{DolevGRS13, DBLP:conf/spaa/DerakhshandehDGRSS14}), and aims at providing the theoretical framework that would allow for
a more rigorous algorithmic study of biomolecular-inspired systems,
more specifically of self-assembly systems with active molecular
components. While our model shares many similarities with the nubot model at a high level, many of the assumptions underlying the nubot model are different
from ours: For example, in the nubot model, particles are allowed to
replicate at will and are allowed to drag many other particles as they
move in space (which is pertinent in many molecular level systems, but
which would not be feasible in systems with very large numbers of
nano-robots with weak bond structures); also the number of states (and
hence also the memory size) that a particle can be in is proportional
to $\log n$---where $n$ relates to the number of particles in the
final desired configuration the system should assume---whereas in
our model the number of possible states is assumed to be constant.

\newcommand{\junk}[1]{}
\junk{
While these systems (and hence the nubot model) share many common characteristics with the programmable matter systems we consider in this work (they also assume an asynchronous network model of simple state machines, they also target shape formation problems, etc.), there are significant differences that make the nubot model not applicable to the scenarios we consider. For example, in the nubot model,  particles are allowed to replicate at will and at high rates, and particles are allowed to drag other particles as they move in space (which is pertinent in many molecular level systems, but which would pose a problem in systems with very large numbers of nano-robots with weak bond structures).
%; the latter also comes with implications on relying on the strength of the bonds between particles in order to be able to access locally whether the movement of a  particle may result in global disconnectivity.
%AR: WE can remove the part of the sentence above after ``latter''.
More importantly, the nubot model assumes a system of computationally-limited particles, but {\em not} a system of finite-state machines: They assume that each particle has a memory of size proportional to $\log n$, where $n$ is the number of particles in the final desired configuration the system should assume. This assumption is necessary in the nubot model, since they consider shape formation problems with predetermined dimensions, but it also makes the systems considered much more computationally powerful than the ones we consider in this paper (e.g., particles can individually count to $n$, which would be impossible in a system of finite state machines) .
} % junk nubot

\section{Model}
\label{sec:model}
\vspace{-2mm}
% 1. Paragraph: Appearance of Particles and their Connectivity
Consider the \emph{equilateral triangular graph} $G_{eqt}$, see Figure\ref{fig:graph}.
\begin{figure}[b!]
    \centering
    \includegraphics[scale=0.85]{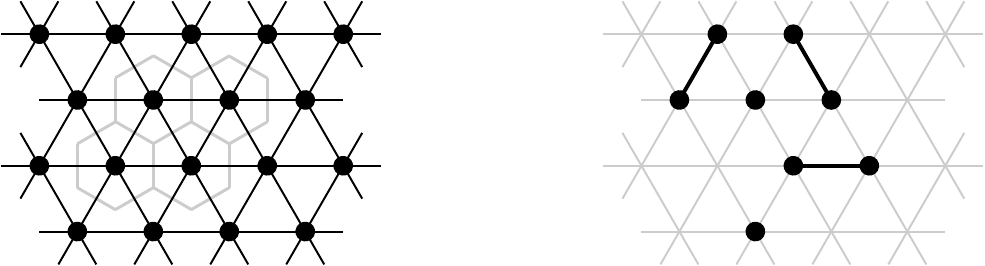}
    \caption{\small
    	The left half of the figure depicts a section of the infinite equilateral triangular graph $G_{eqt}$.
    	Nodes are shown as black circles.
    	For four of the nodes the dual faces in the hexagonal tiling of the Euclidean plane are shown in gray.
    	The right half shows five particles on the graph.
    	When depicting particles we draw the graph as a gray mesh without nodes.
    	A particle occupying a single node is depicted as a black circle,
    	and a particle occupying two nodes is depicted as two black circles connected by an edge.
    }
    \label{fig:graph}
\end{figure}
A \emph{particle} occupies either a single node or a pair of adjacent nodes in $G_{eqt}$,
and every node can be occupied by at most one particle.
Two particles occupying adjacent nodes are defined to be \emph{connected}
and we refer to such particles as \emph{neighbors}.
The graph $G_{eqt}$ is the dual graph of the hexagonal tiling of the Euclidean plane
as indicated in Figure~\ref{fig:graph}.
So geometrically the space occupied by a particle is bound
by either one face or two adjacent faces in this tiling of the plane.

% 2. Paragraph: States and Flags
Every particle has a \emph{state} from a finite set $Q$.
Connected particles can communicate via the edges connecting them in the following way.
A particle $p$ holds a \emph{flag} from a finite alphabet $\Sigma$ for each edge that is incident to $p$
(i.e., all edges incident to a node occupied by $p$
except the edge between the occupied nodes if $p$ occupies two nodes).
A particle occupying the node on the other side of such an edge can read this flag.
This communication process can be used in both directions over an edge.
In order to allow a particle $p$ to address the edges incident to it,
the edges are labeled from the local perspective of $p$.
This labeling starts with $0$ at an edge leading to a node
that is only adjacent to one of the nodes occupied by $p$
and increases counter-clockwise around the particle.

% 3. Paragraph: Movement
Particles move through \emph{expansion} and \emph{contraction}:
If a particle occupies one node, it can expand into an unoccupied adjacent node to occupy two nodes.
If a particle occupies two nodes, it can contract out of one of these nodes to occupy only a single node. (Those two actions can be naturally physically realized on the dual hexagonal tiling of the Euclidean plane.)
Accordingly, we call a particle occupying a single node \emph{contracted}
and a particle occupying two nodes \emph{expanded}.
Note that we can identify six directions in our graph corresponding to the directions of the six edges incident to a node.
The direction of the edge labeled $0$ is defined to remain constant throughout all movement.
We call this direction the \emph{orientation} of a particle.
Figure~\ref{fig:movement} shows an example of the movement of a particle.
\begin{figure}[b!]
    \centering
    \includegraphics[scale=0.85]{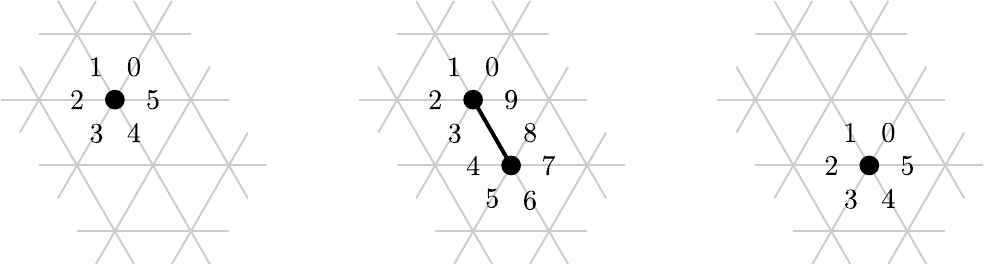}
    \caption{\small
        The three parts of the figure show a moving particle
        together with the labels seen by the particle.
        On the left, the particle occupies only a single node.
        The particle then expands in the direction of the edge labeled $4$
        resulting in the particle occupying two nodes as depicted in the middle.
        Since the expansion changes the number of edges incident to the particle, the edges have to be relabeled.
        The direction of the edge labeled $0$ remains constant.
        Next the particle contracts out of one of the nodes it currently occupies
        towards the direction of the edge labeled $6$
        resulting in the particle occupying only a single node as depicted on the right.
        Again, the edges incident to the particle are relabeled.
    }
    \label{fig:movement}
\end{figure}
Besides executing expansions and contractions in isolation,
we allow pairs of connected particles to combine these primitives to perform a coordinated movement:
One particle can contract out of a certain node at the same time as another particle expands into that node.
We call this movement a \emph{handover}, see Figure~\ref{fig:handover}.
\begin{figure}
    \centering
    \includegraphics[scale=0.85]{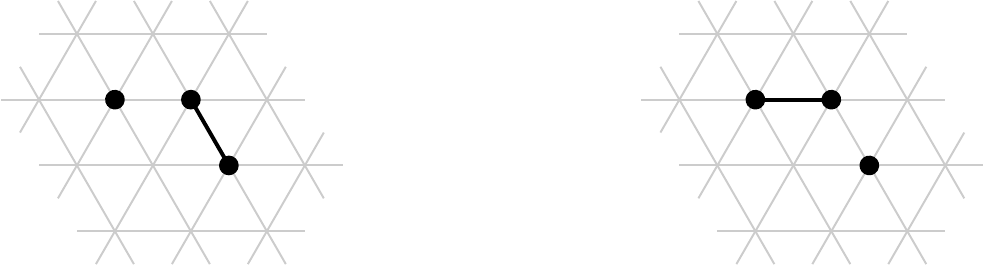}
    \caption{\small
       Two particles performing a handover.
    }
    \label{fig:handover}
\end{figure}
The particles involved in a handover are defined to remain connected during its execution.

% 4. Paragraph: Computational Perspective
Computationally, particles resemble finite state machines.
A particle acts according to a \emph{transition function}
\[
    \delta: \; Q \times \Sigma^{10} \; \to \; \mathcal{P}(Q \times \Sigma^{10} \times M).
\]
For a particle $p$ the function takes the current state of $p$
and the flags $p$ can read via its incident edges as arguments.
Here, the $i$-th coordinate of the tuple $\Sigma^{10}$ represents the flag read
via the edge labeled $i$ when numbering the coordinates of the tuple starting at $0$.
If for a label $i$ there is no edge with that label or if the respective edge leads to a node that is not occupied,
the coordinate of the tuple is defined to be $\varepsilon$.
The value $\varepsilon \in \Sigma$ is reserved for this purpose and cannot be set as a flag by a particle.
The transition function maps to a set of \emph{turns}.
A turn is a tuple specifying a state to assume, flags to set, and a movement to execute.
The set of movements is defined as
\[
    M =
    \{ \text{idle} \} \; \cup \;
    \{ \text{expand}_i \mid i \in [0, 9] \} \; \cup \;
    \{ \text{contract}_i \mid i \in [0, 9] \} \; \cup \;
    \{ \text{handoverContract}_i \mid i \in [0, 9] \}.
\]
The movement \emph{idle} means that $p$ does not move,
and \emph{expand$_i$} and \emph{contract$_i$} are defined as mentioned above.
The index $i$ specifies the edge that defines the direction along which the movement should take place,
as shown in the example in Figure~\ref{fig:movement}. Note that there are only two possible directions for
a contract operation. 
The movement \emph{handoverContract$_i$} specifies a contraction that can only be executed as part of a handover.
In summary, a transition function specifies a set of turns a particle would like to perform
based on the locally available information.

% 5. Paragraph: Actions
A system of particles progresses by executing atomic \emph{actions}.
An action is either the execution of an isolated turn for a single particle
or the execution of a turn for each of two particles resulting in a handover between those particles.
Note that if a movement is not executable, the respective action is not enabled:
For example, a particle occupying two nodes cannot expand although it might specify this movement in a turn.
As another example, a particle cannot expand into an occupied node except as part of a handover.
Finally, an action consisting of an isolated turn involving the movement \emph{handoverContract$_i$}
is never enabled as this movement can only be performed as part of a handover.
The transition function is applied for each particle to determine the set of enabled actions in the system.
From this set, a single action is arbitrarily chosen and executed.
The process of evaluating the transition function and executing an action continues
as long as there is an enabled action.

Two actions are said to be independent if they do not involve nodes that are neighbors in $G_{eqt}$. Each particle can locally ensure that at most one action in its neighborhood is executed at any point in time. Hence, all of our results also hold if a set of mutually independent actions was chosen to be concurrently executed at any point in time.

\section{Morphing Problems}
\label{sec:morphingProblems}
\vspace{-2mm}
In general, we define a \emph{morphing problem} as being a problem 
in which a system of particles has to morph into a shape with specific characteristics 
(by changing the positioning of the particles in $G_{eqt}$) while sustaining connectivity.
Examples of morphing problems are the formation of geometric shapes and
coating objects (i.e., surrounding a given set of nodes).
%and bridging gaps (i.e., filling the gap between two given sets of nodes).
Before we can formally define morphing problems, we need some definitions.

We define the \emph{configuration of a particle} as
the tuple of its state, its flags, the set of nodes it occupies, and its orientation.
A system of particles progresses by performing atomic actions,
each of which affects the configuration of one or two particles.
Therefore, a system progresses through a sequence of configurations
where a \emph{configuration of a system} is the set of configurations of all its particles.
We define the \emph{connectivity graph} $G(c)$ of a configuration $c$ as
the subgraph of $G_{eqt}$ induced by the occupied nodes in $c$.

We can formally define a morphing problem as a tuple $M = (I, G)$
where $I$ and $G$ are sets of connected configurations.
We say $I$ is the set of initial configurations and $G$ is the set of goal configurations.
An algorithm $A$, formally defined by a transition function $\delta$,
\emph{solves} $M$ if three conditions hold:
Consider the execution of $A$ on a system in an arbitrary configuration from $I$.
First, the system stays connected throughout the execution of $A$.
Second, the execution eventually reaches a configuration in which the transition function of each particle maps to the empty set
 (we say $A$ \emph{terminates}).
Third, when the execution terminates, the reached configuration is from $G$.

\section{Infinite Object Coating}
\vspace{-2mm}
As a subclass of the class of morphing problems, one can consider \emph{coating problems}
in which an object is to be coated (i.e., surrounded or engulfed)
by the particles of a system as uniformly as possible.
We investigate the \emph{Infinite Object Coating} problem where the object has an infinite surface and,
accordingly, a uniform coating is accomplished when all the particles of a system are directly connected to the object.

\subsection{Problem Definition}
\label{sec:problemDefinition}
\vspace{-2mm}
In the Infinite Object Coating problem, an object can be represented by a set of contracted particles
 occupying nodes in $G_{eqt}$. 
These particles are in a special \emph{object state}, and we refer to these particles as \emph{object particles}.
A transition function must map to the empty set for an object particle,
and no particle can switch into the object state.
We denote the number of non-object particles in a system by $n$.
In the reminder of this paper, unless otherwise stated, when we refer to a particle, we mean a non-object particle.
We say a particle \emph{lies on the surface} of the object if it is connected to an object particle.
Consider a connected induced subgraph $C$ of $G_{eqt}$.
The subgraph $C$ is called \emph{compact} if $G_{eqt} - C$ is $2$-connected.
An object that induces a compact subgraph in $G_{eqt}$ is a \emph{valid object}.
Intuitively, this definition means that a valid object cannot have \emph{tunnels} of width one,
see Figure~\ref{fig:tunnel}.
Disallowing these tunnels allows particles to move along the object in single file
(as will be described in Section~\ref{sec:movingAlongASurface}) without blocking each other
and therefore improves the clarity of presentation by avoiding boundary cases.

\begin{figure}[hb]
    \centering
    \includegraphics[scale=0.85]{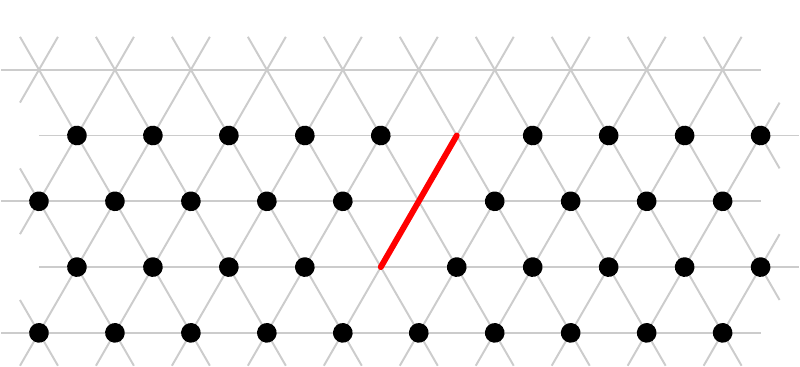}
    \caption{\small
       An example of an invalid object.
       The object occupies a half-plane except for the nodes marked by the solid line.
       These nodes form a tunnel of width one in the object
       since removing the topmost of these nodes from $G_{eqt} - C$ disconnects the two remaining nodes in that graph.
    }
    \label{fig:tunnel}
\end{figure}

As presented in Section~\ref{sec:morphingProblems}, a morphing problem is defined as a tuple $(I, G)$
where $I$ is a set of initial configurations and $G$ is a set of goal configurations.
For the Infinite Object Coating problem, $I$ is the set of all connected configurations consisting of
a valid object together with a finite set of contracted particles.
Every particle stores a \emph{phase} as part of its state and in an initial configuration every particle is in an \emph{inactive} phase; we will elaborate on phases in Section~\ref{sec:spanningForestAlgorithm}.
Similarly, the set $G$ is the set of all configurations consisting of
a valid object together with a finite set of contracted particles that all lie on the surface of the object.

The general Amoebot model as described in Section~\ref{sec:model} can take various specific forms depending on
how systems of particles are initialized, what information particles keep track of in their state,
and what information they share over their edges.
In the Infinite Object Coating problem, we do not make any assumption about the orientation of the particles.
% we assume an arbitrary initial configuration of contracted particles $S$ such that the subgraph of $G_{eqt}$ induced by the nodes occupied by $S$ and by the object is connected. We do not make any assumption about the direction of the edge labeled $0$ of a particle in the initial configuration.
Therefore, we work in a no-compass variant of the model.
While particles do not share a common sense of direction,
they are able to keep track of directions by storing edge labels in their state
and updating these labels upon movement.
The updates can be encoded in the transition function.
We assume that a particle keeps track of whether it is contracted or expanded; 
the particle also keeps track of which edge labels are incident to the occupied node that is an endpoint of the edge labeled $0$.
Additionally, we assume that for an edge with label $i$ the corresponding flag always includes the index $i$,
the information whether the edge is incident to the occupied node that is an endpoint of the edge labeled $0$,
and whether the particle is contracted or expanded.
Using this information, a particle that reads a flag can compare its orientation to the orientation of the particle
that set the flag and therefore particles can exchange information about directions.
Lastly, we assume that a particle keeps track of what edge labels specify valid contraction indices.

For the sake of generality, the Amoebot model does not enforce any fairness condition on the execution of actions.
However, for the Infinite Object Coating problem we make the following assumption:
Any set of of consecutive configurations in which a particle $p$ could be affected by an enabled action, but is not, is finite.

\subsection{Algorithm}
\label{sec:algorithm}
\vspace{-2mm}
Our algorithm for the Infinite Object Coating problem is a combination of three algorithmic primitives.
First, particles lying on the surface of the object lead the way by moving in a common direction along the surface.
Second, the remaining particles follow behind the leading particles
resulting in the system flattening out towards the direction of movement.
Third, particles on the surface check whether there are particles not lying on the surface
and use this information to eventually achieve termination.
We present each of these primitives in detail in the following sections.

\subsubsection{Moving Along a Surface}
\label{sec:movingAlongASurface}
\vspace{-2mm}
Our first algorithmic primitive solves a simple problem:
We want all particles on the surface to move along the surface in a common direction.
However, before we can present our algorithm for this problem, we need some definitions.
For an expanded particle, we denote the node the particle last expanded into as the \emph{head} of the particle
and call the other occupied node its \emph{tail}.
For a contracted particle, we define the single occupied node to be both the head and the tail.
The set of labels associated to the edges incident to the head can be encoded as part of the state,
and this information can be set upon expansion as part of the transition function.
Therefore, a particle can always distinguish the labels of edges incident to its head (\emph{head labels})
from the labels of edges incident to its tail (\emph{tail labels}).
Accordingly, we call edges that are labeled with a head label \emph{head edges}
and the remaining edges \emph{tail edges}.
Combined with the information about valid contraction indices described in Section~\ref{sec:problemDefinition},
a particle can deliberately contract out of its tail or its head.
In our algorithm, particles are only allowed to contract out of their tails
so that the fact that a particle contracts uniquely defines the contraction direction.
Note that with this convention, the head of a particle still is occupied by that particle after a contraction.

A particle on the surface can move along the surface in two directions.
However, we want all particles on the surface to move in a common direction.
The simple procedure given in Algorithm~\ref{alg:FaSAlgorithm},
which is very similar to the moving algorithm presented by Drees et al.~\cite{DHKS12},
can be used to achieve this goal:
\begin{algorithm}
	\begin{algorithmic}
		\State let $k$ be the size of the neighborhood of the particle (i.e., $k = 6$ or $k = 10$)
		\State let $i$ be the label of an edge connected to the object
		\While{edge $i$ is connected to the object} 
		\State $i \; \gets \; (i - 1) \mod k$
		\EndWhile
		\State \textbf{return} $i$
	\end{algorithmic}
	\caption{Movement along a Surface}
	\label{alg:FaSAlgorithm}
\end{algorithm}
A contracted particle uses the procedure to compute the direction of an expansion,
and an expanded particle simply contracts according to above definitions.
The correctness of this approach is based on two facts.
First, all particles share a common sense of rotation
(i.e., the edge labels always increase counter-clockwise around a particle).
Second, according to our definition of a valid object,
an object must occupy a single consecutive sequence of nodes around a particle
from the local perspective of that particle and not all nodes around a particle belong to the object.

\subsubsection{Spanning Forest Algorithm}
\label{sec:spanningForestAlgorithm}
\vspace{-2mm}
As the name suggests, the \emph{spanning forest algorithm} aims to organize the particles in a system
as a spanning forest where the particles that represent the roots of the trees in the forest
are considered leaders whom the remaining particles follow.
Therefore, the movement of the system is dominated by the movement of the leaders.
Every particle that is connected to the surface becomes a leader,
and leaders move along the surface as described in the previous section.
Algorithm~\ref{alg:spanningForestAlgorithm} provides a detailed description of this approach.

\begin{algorithm}
	A particle is in one of three phases \emph{inactive}, \emph{follow}, and \emph{lead}.
	Initially, all particles are assumed to be in phase \emph{inactive}.
	The phase of a particle is encoded as part of its state,
	and a particle indicates its phase as part of all its flags.
	We call a particle in phase follow a \emph{follower} and a particle in phase lead a \emph{leader}.
	A follower stores a head label $d$ in its state
	and includes a \emph{follow indicator} in the flag for the edge with label $d$.
	Depending on its phase, a particle $p$ behaves as described below.
	The transition function maps either to a set containing a single turn or to the empty set.
	The specified conditions are to be checked in the given order.
	If a condition holds, the transition function maps to the set containing only the respective turn.
	If none of the conditions holds, the transition function maps to the empty set.
	\begin{tabularx}{\textwidth}{lX}
		\textbf{inactive}: &
		If $p$ is connected to the surface, it becomes a leader and executes the idle movement.
		If an adjacent node is occupied by a leader or a follower,
		$p$ sets $d$ to point towards that node and becomes a follower.
		\\
		\textbf{follow}: &
		If $p$ is contracted and connected to the surface,
        it becomes a leader and executes the idle movement.
		If $p$ is contracted and there is an expanded particle $p'$ occupying the node
		reached via the edge labeled $d$, $p$ expands in direction $d$ as an attempt to a handover and sets $d$
		to correspond to the contraction direction of $p'$.
		If $p$ is expanded and a follow indicator is read from a contracted neighbor over a tail edge,
		$p$ executes a handover contraction and changes $d$ to keep the direction constant.
		If $p$ is expanded, no follow indicator is read over a tail edge,
		and $p$ has no inactive neighbor, $p$ contracts and changes $d$ to keep the direction constant.
		\\
		\textbf{lead}: &
		If $p$ is contracted, it expands in the direction computed by Algorithm~\ref{alg:FaSAlgorithm}.
		If $p$ is expanded and a follow indicator is read from a contracted neighbor over a tail edge,
        $p$ executes a handover contraction.
		If $p$ is expanded, no follow indicator is read over a tail edge, and there is no inactive neighbor,	$p$ contracts.
	\end{tabularx}
	\caption{Spanning Forest Algorithm}
	\label{alg:spanningForestAlgorithm}
\end{algorithm}

In contrast to particles in phase inactive, we say followers and leaders are \emph{active}.
As specified in Algorithm~\ref{alg:spanningForestAlgorithm}, the value $d$ is only defined for followers.
We denote the node in $G_{eqt}$ reached from a follower $p$ via the edge labeled $d$ as $u(p)$.
The following lemmas demonstrate some properties that hold during the execution of the spanning forest algorithm
and will be used in Section~\ref{sec:analysis} to analyze our complete algorithm.

\begin{lemma}
    \label{lem:successor}
	For a follower $p$ the node $u(p)$ is occupied by an active particle.
\end{lemma}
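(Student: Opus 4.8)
The plan is to prove the statement as an invariant maintained throughout the entire execution, by induction on the sequence of atomic actions. For the base case, observe that in any initial configuration every particle is inactive, so there are no followers and the claim holds vacuously. Before the inductive step I would record two monotonicity facts read off from Algorithm~\ref{alg:spanningForestAlgorithm}: no transition ever turns an active particle back into an inactive one (a follower may only become a leader, and a leader never changes phase), and inactive particles never move (their only turns are the idle transitions into leader or follower). Hence \emph{active} is a permanent property, and any particle that ever expands into a node must be active. I would also note the initialization of a follower: a particle can only become a follower while inactive, and at that point (phase \emph{inactive}) it sets $d$ to point at an adjacent node occupied by a leader or a follower, i.e., at an active particle; since this transition is an idle move it does not displace that neighbor, so $u(p)$ is immediately occupied by an active particle.

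The heart of the argument is to show that an existing follower $p$ cannot have $u(p)$ become empty or occupied by an inactive particle. Let $q$ be the active particle occupying $u(p)$ before the action. An atomic action moves at most two particles, so $u(p)$ can only be affected by an action involving $q$ (or by $p$'s own move, handled below). Since a particle relinquishes a node only by contracting out of its tail, the node $u(p)$ can be emptied only when it is the tail of $q$ and $q$ contracts, and I would split this into the two contraction cases of the algorithm. If $q$ performs a plain contraction, its transition requires that no follow indicator is read over a tail edge; but $p$'s follow indicator sits on the edge leading to $u(p)$, which is a tail edge of $q$, so this contraction is simply not enabled. If instead $q$ performs a handover contraction, then by definition some contracted neighbor simultaneously expands into the very node $u(p)$ that $q$ vacates, and that neighbor is active; so $u(p)$ stays occupied by an active particle. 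Thus whenever $q$ gives up $u(p)$ it is refilled in the same action by an active particle, and otherwise $q$ stays put. The remaining subcase, where $u(p)$ is the head of $q$, is immediate: $q$ never contracts out of its head, and an expansion retains all currently occupied nodes, so $u(p)$ remains occupied by $q$.

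It then remains to check the actions in which $p$ itself moves and may change $d$. If $p$ becomes a leader it is no longer a follower and there is nothing to prove. If $p$ is contracted and performs the handover expansion (phase \emph{follow}, second case), then $q$ is the expanded particle at $u(p)$; after the handover $p$ occupies the vacated node and resets $d$ to the contraction direction of $q$, so the new $u(p)$ is the single node still held by $q$, which is active — this is precisely the handover of the previous paragraph viewed from $p$'s side. In the two contraction cases for $p$, the particle keeps its head and only relabels $d$ to preserve its geometric direction, so $u(p)$ denotes the same node $N$ as before; the only particles that move are $p$ (which vacates only its tail, and $N$ is reached from its head) and possibly a contracted follower over a tail edge (which, being the expanding party of a handover, retains its old node). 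Since neither move removes a particle from $N$, the occupant of $N$ is untouched and remains active by the inductive hypothesis. Having covered every action, the invariant holds after the action, completing the induction. The step I expect to be the crux is the dichotomy for $q$'s contraction: correctness hinges exactly on the fact that a plain contraction is disabled whenever a follow indicator is present on a tail edge, so an expanded follower effectively locks its predecessor in place until a handover can pull it forward, which is what prevents the followed node from ever being abandoned.
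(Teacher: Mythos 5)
Your proof is correct and follows essentially the same approach as the paper: an induction over the execution whose base case is the follower's initialization step, and whose inductive step is a case analysis on the action, hinging on the same key observation that a plain contraction is disabled whenever a follow indicator is read over a tail edge, while a handover contraction refills the vacated node with an active particle. The only difference is organizational (you state the invariant globally and order cases by how $u(p)$ is affected, while the paper inducts per follower and orders cases by which particles the action involves), which does not change the substance of the argument.
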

\begin{proof}
    Consider a follower $p$ in any configuration during the execution of Algorithm~\ref{alg:spanningForestAlgorithm}.
    Note that $p$ can only get into phase follow from phase idle,
    and once it leaves the follow phase it will not switch to that phase again.
    Consider the first configuration $c_1$ in which $p$ is a follower.
    In the configuration $c_0$ immediately before $c_1$, $p$ must be inactive
    and it becomes a follower because of an active particle $p'$ occupying $u(p)$ in $c_0$.
    The particle $p'$ still occupies $u(p)$ in $c_1$.
    Now assume that $u(p)$ is occupied by an active particle $p'$ in a configuration $c_i$,
    and that $p$ is still a follower in the next configuration $c_{i+1}$ that results from executing an action $a$.
    If $a$ affects $p$ and $p'$, the action must be a handover in which $p$ updates its value $d$
    such that $u(p)$ changes but $p'$ again occupies $u(p)$ in $c_{i+1}$.
    If $a$ affects $p$ but not $p'$,
    it must be a contraction in which $u(p)$ does not change and is still occupied by $p'$.
	If $a$ affects $p'$ but not $p$, there are multiple possibilities.
 	The particle $p'$ might switch from phase follow to phase lead or it might expand,
 	neither of which violate the lemma.
 	Furthermore, $p'$ might contract.
 	If $u(p)$ is the head of $p'$, $p'$ still occupies $u(p)$ in $c_{i+1}$.
 	Otherwise, $p'$ reads a follow indicator from $p$ over a tail edge in $c_i$
 	and therefore the contraction must be part of a handover.
 	As $p$ is not involved in the action, the handover must be between $p'$ and a third active particle $p''$.
 	It is easy to see that after such a handover $u(p)$ is occupied by either $p'$ or $p''$.
	Finally, if $a$ affects neither $p$ nor $p'$, $u(p)$ will still be occupied by $p'$ in $c_{i+1}$.
\end{proof}

Based on Lemma~\ref{lem:successor}, we define a successor relation on the active particles in a configuration $c$.
Let $p$ be a follower.
We say $p'$ is the \emph{successor} of $p$ if $p'$ occupies $u(p)$.
Analogously, we say $p$ is a \emph{predecessor} of $p'$.
Furthermore, we define a directed graph $A(c)$ for a configuration $c$ as follows.
$A(c)$ contains the same nodes as $G(c)$.
For every expanded particle $p$ in $c$, $A(c)$ contains a directed edge from the tail to the head of $p$,
and for every follower $p'$ in $c$, $A(c)$ contains a directed edge from the head of $p'$ to $u(p')$.

\begin{lemma}
	\label{lem:forest}
	The graph $A(c)$ is a forest, and if there is at least one active particle,
    every connected component of inactive particles contains a particle that is connected to an active particle.
\end{lemma}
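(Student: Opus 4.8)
The plan is to split the statement into its two assertions and treat them largely independently, using the successor relation of Lemma~\ref{lem:successor} as the main tool. For the forest claim I would first establish the purely structural fact that every node of $A(c)$ has out-degree at most one, by a short case distinction on the particle occupying the node: an inactive particle and a contracted leader contribute no out-edge; a contracted follower contributes only its edge to $u(p)$; an expanded leader contributes only the tail-to-head edge at its tail, its head being a sink; and an expanded follower contributes the tail-to-head edge at its tail and the edge to $u(p)$ at its head, so each of its two nodes has out-degree one. Object nodes are isolated. Since a finite directed graph with out-degree at most one contains an undirected cycle if and only if it contains a directed cycle, it suffices to rule out directed cycles. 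Because the only way a directed path can continue past the head of a particle is through the follower edge $\text{head}\to u(p)$, any directed cycle projects onto a cycle $p_0 \to p_1 \to \cdots \to p_{k-1}\to p_0$ of the successor relation in which every $p_i$ is a follower. Hence the forest claim reduces to showing that the successor relation is acyclic.

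To prove acyclicity I would argue by induction over the sequence of actions. In the initial configuration there are no followers, so the relation is empty. In the inductive step I would check that no single action creates a cycle. An inactive particle becoming a leader, or a follower becoming a leader, only removes out-edges or adds a sink. An inactive particle becoming a follower $p$ makes $p$ point to an already active neighbour; since no follower pointed to $p$ while it was inactive (by Lemma~\ref{lem:successor}, successors are active) and the action changes no other particle's value $d$, the particle $p$ enters as a new source with no in-edge and thus cannot lie on a cycle. The remaining actions are movements, and here I would verify that an isolated expansion or contraction, and in particular every handover, leaves the particle-level successor relation unchanged: in a handover the expanding follower resets $d$ to the contraction direction of its partner and so keeps pointing at the same partner particle, while the contracting particle keeps its own successor; isolated contractions update $d$ only to keep the direction constant. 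As none of these operations alters which particle is whose successor, acyclicity is preserved.

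For the second assertion I would rely on the non-object particles staying connected throughout the execution, an invariant maintained by the algorithm since a handover keeps the two involved particles adjacent. Given this, the statement is almost immediate: assume there is at least one active particle and let $K$ be a connected component of inactive particles. Because some active particle exists, $K$ is a proper subset of the connected set of all non-object particles, so some particle of $K$ has a neighbour outside $K$; as $K$ is a maximal inactive component, that neighbour is not inactive and is therefore active, which is exactly the claimed particle.

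The main obstacle I anticipate is the movement case of the acyclicity induction, specifically the handover: one has to unwind the head/tail bookkeeping and the update rule for $d$ on both the expanding and the contracting side carefully enough to be certain that after every handover each follower still designates the same successor particle, so that the successor relation is genuinely invariant under movement. The out-degree bound and the second assertion are comparatively routine once connectivity of the non-object particles is available.
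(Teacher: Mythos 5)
Your proposal has two genuine gaps, one in each half. The first is the handover case of your acyclicity induction---exactly the step you flagged as the main obstacle. The claim that every handover leaves the particle-level successor relation unchanged is false. Consider an expanded particle $p'$ performing a handover in which a third particle $p''$ expands into the tail of $p'$, and a follower $p$ \emph{not} involved in the action with $u(p)$ equal to that tail node. Since $p$ does not move, $u(p)$ is the same node afterwards, but that node is now occupied by $p''$ rather than $p'$: the successor of $p$ has changed from $p'$ to $p''$. (The paper's proof of Lemma~\ref{lem:successor} treats precisely this case, and its proof of Lemma~\ref{lem:forest} explicitly concedes that ``such a handover can change the successor relation among the nodes.'') Your induction is salvageable but needs an extra argument: a handover redirects every successor edge into $p'$ from followers of the vacated node over to $p''$, whose unique out-edge points back to $p'$; one must then check that this redirection cannot close a directed cycle (for instance, a shortest cycle after the redirection would have to contain $p''$, hence the edge $p'' \to p'$, hence a path from $p'$ back to some redirected follower $q$ avoiding $p''$, which together with the old edge $q \to p'$ gives a cycle before the handover---contradiction). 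As written, the step asserts something false, so the induction does not go through.

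The second gap is more fundamental: your proof of the second assertion rests on the invariant that the non-object particles form a connected subgraph throughout the execution. That invariant is not available---it can already fail in the initial configuration. Initial configurations are only required to be connected \emph{as a whole} (object plus particles); since the object is itself connected and infinite, the non-object particles may form several clusters each attached to the object but not to one another. Then an inactive component $K$ need not have any non-object neighbour outside $K$ at all, and your ``proper subset of a connected set'' argument never gets started. The paper instead proves the second assertion in the same induction as the forest property, relying on the algorithm's explicit guards: an active particle contracts in isolation only if it has no inactive neighbour and reads no follow indicator over a tail edge, so no contraction can strand an inactive component, and when an inactive particle activates, its former component is either empty or adjacent to that now-active particle; handovers preserve the set of occupied nodes. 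You need an argument of this kind rather than standalone connectivity of the non-object particles.
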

\begin{proof}
    In an initial configuration $c_0$, all particles are inactive and therefore the lemma holds trivially.
    Now assume that the lemma holds for a configuration $c_i$.
    We will show that it also holds for the next configuration $c_{i+1}$ that results from executing an action $a$.
    If $a$ affects an inactive particle $p$, this particle either becomes a follower or a leader.
	In the former case $p$ joins an existing tree, and in the latter case $p$ forms a new tree in $A(c_{i+1})$.
    In either case, $A(c_{i+1})$ is a forest and the connected component of inactive particles
    that $p$ belongs to in $c_i$ is either non-existent or connected to $p$ in $c_{i+1}$.
	If $a$ affects only a single particle $p$ that is in phase follow, this particle can contract or become a leader.
	In the former case, $p$ has no predecessor $p'$ such that $u(p')$ is the tail of $p$
	and also $p$ has no idle neighbors.
	Therefore, the contraction of $p$ does not disconnect any follower or inactive particle
	and, accordingly, does not violate the lemma.
	In the latter case, $p$ becomes a root of a tree which also does not violates the lemma.
    If $a$ involves only a single particle $p$ that is in phase lead, $p$ can expand or contract.
    An expansion trivially cannot violate the lemma and the argument for the contraction is the same as for the contraction of a follower above.
    Finally, if $a$ involves two active particles in $c_i$, these particles perform a handover.
    While such a handover can change the successor relation among the nodes, it cannot violate the lemma.
\end{proof}

The following lemma shows that the spanning forest algorithm achieves progress
in that as long as the leaders keep moving, the remaining particles will eventually follow them.

\begin{lemma}
	\label{lem:contract}
	An expanded particle eventually contracts.
\end{lemma}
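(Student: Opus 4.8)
The plan is to argue by contradiction using the fairness assumption. Suppose some expanded particle $p$ never contracts. An expanded particle cannot expand, and the only non-idle turns available to it---the handover contraction and the plain contraction of Algorithm~\ref{alg:spanningForestAlgorithm}---both result in a contraction; hence $p$ must stay expanded forever and never be involved in any action. By the fairness assumption, a particle that has an enabled action in every configuration from some point on is eventually involved in an action. It therefore suffices to show that from some configuration on $p$ continuously has an enabled contraction, that is, that either the handover-contraction rule or the plain-contraction rule eventually applies and keeps applying until $p$ moves.

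First I would remove the obstacle posed by inactive neighbors. Since $p$ never moves, its set of neighboring nodes is fixed; inactive particles neither move nor ever re-enter the inactive phase, and any particle that newly occupies a node adjacent to $p$ must have moved and is therefore active. Each inactive neighbor of $p$ is adjacent to the active particle $p$ and thus, by the inactive rule, has the ``become a follower'' action continuously enabled, so by fairness it eventually becomes active. Hence there is a time after which $p$ has no inactive neighbor, and this persists. After this time the plain-contraction rule can fail only because $p$ reads a follow indicator over a tail edge, and whenever that indicator is set by a \emph{contracted} predecessor the handover-contraction rule applies; so from now on $p$ can fail to contract only when it reads, over a tail edge, a follow indicator set by an \emph{expanded} predecessor.

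To rule out this last obstacle I would use that $A(c)$ is a forest (Lemma~\ref{lem:forest}). For an expanded particle $q$ define its \emph{rank} as the length of the longest chain $q = q_0, q_1, \dots, q_m$ in which each $q_{j+1}$ is an expanded predecessor of $q_j$ whose follow indicator $q_j$ reads over a tail edge; acyclicity of $A(c)$ makes this rank finite. I would then prove by induction on the rank that every expanded particle eventually contracts. A rank-$0$ particle has no expanded predecessor over a tail edge, so once its inactive neighbors are gone either the handover-contraction rule or the plain-contraction rule applies and, by fairness, it contracts. For the inductive step, if $p$ is blocked by an expanded predecessor $p_1$, then $p_1$ has strictly smaller rank and, by the induction hypothesis, eventually contracts. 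Once $p_1$ is contracted it is a contracted predecessor of $p$ reading over a tail edge, so the handover-contraction rule applies to $p$; it continues to apply until $p$ moves, because the only way $p_1$ can leave the contracted state is by the handover-expansion rule into $p$'s tail, which is exactly the action that contracts $p$ (and if instead $p_1$ becomes a leader it drops its follow indicator, whereupon the plain-contraction rule applies to $p$). By fairness $p$ then contracts.

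The step I expect to be the main obstacle is justifying this induction even though the configuration keeps changing, so that the rank of $p$ is a priori not monotone. The crux is the invariant suggested above: while $p$ stays expanded its rank cannot increase, because a contracted predecessor reading over $p$'s tail edge can only become expanded by performing the very handover that contracts $p$, and no fresh expanded predecessor can appear over $p$'s tail edge while $p$ is stationary---a newly created predecessor comes from an inactive particle turning into a follower and is therefore contracted. Establishing this invariant carefully is the heart of the proof; with it in hand, the rank induction together with the fairness reduction gives the claim.
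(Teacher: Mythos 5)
Your proposal is correct and takes essentially the same route as the paper's proof: both arguments reduce the claim to the fairness assumption, eliminate inactive neighbors, and then use the forest structure of $A(c)$ (Lemma~\ref{lem:forest}) to trace the chain of expanded predecessors backwards to a particle that can contract, propagating contractions forward along that chain. Your explicit rank induction and the rank non-increase invariant are a more rigorous rendering of the paper's informal ``apply the complete argument \ldots\ and so on backwards along a branch'' step, and they also substantiate the persistence-of-enabled-actions claims that the paper asserts without detailed justification.
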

\begin{proof}
	Consider an expanded particle $p$ in a configuration $c$.
	Note that $p$ must be active.
	If there is an enabled action that includes the contraction of $p$,
	that action will remain enabled until $p$ contracts and therefore $p$ will contract eventually
	according to the fairness assumption we made in Section~\ref{sec:problemDefinition}.
	So assume that there is no enabled action that includes the contraction of $p$.
	According to the behavior of inactive particles,
	at some point in time all particles in the system will be active.
	If the contraction of $p$ becomes part of an enabled action before this happens, $p$ will eventually contract.
	So assume that all particles are active but still $p$ cannot contract.
	If $p$ has no predecessors, the isolated contraction of $p$ is an enabled action which contradicts our assumption.
	Therefore, $p$ must have predecessors.
	Furthermore, $p$ must read at least one follow indicator over a tail edge and all predecessors
	from which it reads a follow indicator must be expanded
	as otherwise $p$ could again contract as part of a handover.
	Let $p'$ be one of the predecessors of $p$.
	If $p'$ would contract, a handover between $p'$ and $p$ would become an enabled action.
	We can apply the complete argument presented in this proof so far to $p'$
	and so on backwards along a branch in a tree in $A(c)$ until we reach a particle that can contract.
	We will reach such a particle by Lemma~\ref{lem:forest}.
	Therefore, we found a sequence of expanded particles that starts with $p'$
	and ends with a particle that eventually contracts.
	The contraction of that last particle will allow the particle before it in the sequence to contract and so on.
	Finally, the contraction of $p$ will become part of an enabled action and therefore $p$ will eventually contract.
\end{proof}

In the above lemmas, the direction of expansion of leaders is not used.
Furthermore, the fact that only particles on the surface become leaders is not used.
Therefore, the algorithm works independently of the selection of leaders and their expansion direction.
This makes the spanning forest algorithm a reusable algorithmic primitive.

\subsubsection{Complaining Algorithm}
\label{sec:complainingAlgorithm}
\vspace{-2mm}
The algorithm so far achieves that the particles spread out towards one direction on the surface,
which will be shown formally in Section~\ref{sec:analysis}.
However, the particles keep moving indefinitely even when all particles lie on the surface.
Since we require termination from an algorithm to solve the Infinite Object Coating problem,
we need another algorithmic primitive that ensures that once all particles are on the surface,
they eventually stop moving.
To achieve this, we use the idea of \emph{complaining}, see Algorithm~\ref{alg:complainingAlgorithm}.
The algorithm extends Algorithm~\ref{alg:spanningForestAlgorithm} by changing the set of turns for leaders.
The conditions in Algorithm~\ref{alg:complainingAlgorithm} ought to be checked
before the conditions given in Algorithm~\ref{alg:spanningForestAlgorithm}.

\begin{algorithm}
	Consider a leader particle $p$ and let $s$ be the direction returned by Algorithm~\ref{alg:FaSAlgorithm}, i.e., the direction that leaders use to travel along the surface.
	Leaders can include a \emph{complaint indicator} in a flag.
	If $p$ is contracted and cannot expand or perform a handover and sees a follow indicator or complaint indicator, it sends a complaint indicator in direction $s$ and performs the idle movement.
	If $p$ is contracted %, has no neighbor in direction $s$ 
	and does not see a complaint indicator, it does not perform any action.
	Otherwise $p$ behaves according to Algorithm~\ref{alg:spanningForestAlgorithm}.
	\caption{Complaining Algorithm}
	\label{alg:complainingAlgorithm}
\end{algorithm}

Note that a complaint indicator will be \emph{consumed} by a leader $p$ if it expands, contracts, or performs a handover. That is, as long as all particles which forwarded the indicator have not moved up to $p$, $p$ will not see a complaint indicator. Furthermore, consider a follower $q$ that reached the surface, but is not a leader yet. If $q$ reads a complaint indicator, it will not forward the indicator directly, but as soon as it turns into a leader.  Moreover, if all particles are leaders, then no leader sees a follow indicator.
We extend the notion of $u(p)$ from Section~\ref{sec:spanningForestAlgorithm} to leaders. The node $u(p)$ for a leader $p$ is the node in the direction returned by  Algorithm~\ref{alg:FaSAlgorithm}. Hence, the notion of successors (i.e., $p'$ is a successor of $p$ in some configuration $c$ if $p'$ occupies $u(p)$) is now also applicable for leaders. If $u(p)$ is unoccupied, $p$ has no successor. 
The \emph{descendants} of a particle $p$ are all nodes reachable by the successor relation (i.e., the successor of, the successor of the successor, and so on). For each particle $p$ we denote the descendant that has no successor with $a(p)$.
For the next lemma consider a system that behaves according to Algorithm~\ref{alg:spanningForestAlgorithm} and Algorithm~\ref{alg:complainingAlgorithm}.

\begin{lemma} 
\label{lem:complain}
As long as a follower particle $p$ exists, a descendant will eventually expand, and if all particles are leaders the transition function of every leader eventually maps to the empty set.
\end{lemma}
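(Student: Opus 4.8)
The plan is to prove the two claims separately, in each case tracking the demand carried by follow and complaint indicators along the successor relation. Both arguments rest on a structural observation about the descendant chain of a follower $p$. By Lemma~\ref{lem:successor} every follower has a successor, and the successor relation is functional; it is also acyclic, since its restriction to followers is the forest of Lemma~\ref{lem:forest} and the successor of a leader lies in the fixed surface direction $s$, which advances monotonically. As the number of particles is finite, the chain of descendants of $p$ terminates, necessarily at a leader (a follower could not terminate it), which is exactly $a(p)$. Since $a(p)$ has no successor, the node $u(a(p))$ reached in direction $s$ is unoccupied, so $a(p)$ is free to expand along the surface; this is the particle we will drive to expand.

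For the first claim I would show that, while $p$ is a follower, the demand it generates eventually reaches $a(p)$ and forces an expansion. The follower $p$ permanently carries a follow indicator toward its successor, and every particle on the chain from $p$ to $a(p)$ relays this demand forward: an intermediate follower does so through its own follow indicator, while an intermediate leader has its forward node $u(\cdot)$ occupied by its successor, is therefore unable to expand, and by Algorithm~\ref{alg:complainingAlgorithm} forwards a complaint in direction $s$ upon perceiving the relayed demand. Proceeding inductively along the chain, $a(p)$ eventually perceives a follow or complaint indicator and, being free to expand, expands; this is the expansion of a descendant of $p$ that we seek. The one complication is that some particles on the chain may currently be expanded: here I invoke Lemma~\ref{lem:contract} to force each to contract, noting that whenever such a contraction is a handover the partner expands, which already provides a descendant expansion. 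Together with the fairness assumption of Section~\ref{sec:problemDefinition}, this shows a descendant of $p$ is eventually expanded.

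For the second claim, suppose every particle is a leader. Then no follow indicator is ever set, so no leader ever reads one. Consequently Algorithm~\ref{alg:complainingAlgorithm} creates no new complaint, since a leader emits a complaint only while perceiving demand and the only remaining form of demand is a preexisting complaint. I would first use Lemma~\ref{lem:contract} to drive every expanded leader to contract; with no follow indicators there are no handovers, so these contractions are isolated, and after finitely many of them all leaders are contracted. It remains to eliminate residual complaints. For this I would argue that a leader whose forward node is free consumes a complaint it perceives by expanding and then contracting, rather than merely forwarding it; since complaints cannot be regenerated, a potential argument (for instance on the distance from the frontmost complaint to such a leader, which strictly decreases at each consumption) shows that after finitely many actions no complaint remains. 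At that point every leader is contracted and perceives neither a follow nor a complaint indicator, so by Algorithm~\ref{alg:complainingAlgorithm} its transition function maps to the empty set.

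The step I expect to be the main obstacle is controlling these indicators through the continual reconfiguration of the chain. In the first claim one must ensure the relayed demand is not lost while handovers repeatedly rewrite the successor relation, which is where the bookkeeping of Lemma~\ref{lem:successor} and Lemma~\ref{lem:forest} is essential; in the second claim one must rule out an endless circulation of complaints by proving that consumption at a free frontier strictly dominates forwarding. Turning the informal phrase ``eventually reaches $a(p)$'' into a rigorous statement under the fairness assumption, while the chain itself keeps changing, is the delicate core of both parts.
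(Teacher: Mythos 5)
Your proposal follows essentially the same route as the paper's proof: for the first claim, the demand from the follower's follow indicator is converted by a blocked leader into a complaint that is relayed in direction $s$ until it reaches $a(p)$, which is always free to expand; for the second claim, the absence of follow indicators means no new complaints arise, the existing ones are consumed, and the transition functions then map to the empty set. Your treatment is if anything slightly more detailed than the paper's (the potential argument for the vanishing of complaints, and the explicit use of Lemma~\ref{lem:contract} to handle expanded particles on the chain, where the paper argues by contradiction with the same handover observation), but the decomposition and key ideas coincide.
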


\begin{proof}
For the first statement assume that even though the particle $p$ exists, no descendant expands. 
Following our assumption none of the descendants can expand, therefore they all have to be contracted, because an expanded descendant would allow for a handover which involves an expansion.
Therefore, a complaint indicator is created by a leader particle that is a descendant of $p$ and sees a follow indicator. This indicator is forwarded among the descendants along the surface until $a(p)$ sees it. 
Particle $a(p)$ can always expand, which contradicts the assumption. 

To prove the second statement, we look at the case in which all particles are leaders. We already mentioned that no more follow indicators exist. Therefore, it is easy to see that all complaint indicators eventually vanish. Accordingly, the transition function of all particles without a neighbor in direction $s$ maps to the empty set. As a result, the transition function of leaders that are neighbors to leaders without a neighbor in direction $s$ will eventually map to the empty set. This process continues until the transition function of every particle maps to the empty set.
\end{proof}

\subsection{Analysis}
\label{sec:analysis}
\vspace{-2mm}
Now, we can show that our algorithm as developed in the previous three sections
solves the Infinite Object Coating problem.

\begin{theorem}
	\label{thm:solve}
	Our algorithm solves the Infinite Object Coating problem.
\end{theorem}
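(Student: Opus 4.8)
The plan is to verify the three conditions in the definition of \emph{solves} from Section~\ref{sec:morphingProblems}: starting from any configuration in $I$, the system (i) stays connected, (ii) terminates, and (iii) terminates in a configuration from $G$, i.e.\ one in which every particle is contracted and lies on the surface. I would handle connectivity first, then termination, and finally argue that the terminal configuration lies in $G$. The liveness lemmas (Lemma~\ref{lem:contract} and Lemma~\ref{lem:complain}) do the heavy lifting for termination, while Lemma~\ref{lem:forest} handles connectivity.

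For connectivity I would use Lemma~\ref{lem:forest}: in every reachable configuration $A(c)$ is a forest and, whenever an active particle exists, each connected component of inactive particles is attached to an active particle. Following the directed edges of $A(c)$ from any active particle leads, within the finite acyclic structure, to a sink, and every sink is occupied by a leader; since every leader is adjacent to the (connected) object, all active particles are connected to the object and all inactive particles are connected through the active ones. Because object particles never move and the inductive step of Lemma~\ref{lem:forest} already shows that no expansion, tail-contraction, or handover severs a particle from its successor, $G(c)$ remains connected throughout. This part is essentially bookkeeping on top of Lemma~\ref{lem:forest}.

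For termination I would establish two claims: first, the system eventually reaches a configuration in which \emph{every} particle is a leader; second, once that happens the second statement of Lemma~\ref{lem:complain} guarantees that the transition function of every leader eventually maps to the empty set, which is exactly termination. The first claim is the crux, and I would prove it in two stages. In stage (a), every inactive particle eventually becomes active: since the initial configuration is connected to the object, at least one particle starts on the surface and becomes a leader, Lemma~\ref{lem:forest} keeps each inactive component attached to an active particle, such a boundary inactive particle turns into a follower, and---using the fairness assumption of Section~\ref{sec:problemDefinition} together with the finiteness of $n$ and the fact that activity is permanent---the inactive count drops to zero in finite time. In stage (b), once all particles are active, every follower eventually reaches the surface contracted and becomes a leader: combining Lemma~\ref{lem:contract} with the first statement of Lemma~\ref{lem:complain} shows the system keeps moving while a follower exists, and since leaders advance along the surface in the common direction of Algorithm~\ref{alg:FaSAlgorithm} into ever-new unoccupied surface nodes---always available because the surface is infinite but only $n$ particles must be placed---the followers trailing each leader are pulled onto the surface one at a time. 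A suitable progress measure (for instance the number of particles not yet on the surface, or a lexicographic measure on positions along the surface) is then shown to strictly decrease infinitely often and never increase, forcing the follower count to zero.

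The main obstacle is precisely stage (b): turning the informal ``flattening out'' picture into a monotone measure that provably decreases to zero. The delicate points are ruling out that leaders and their trailing followers deadlock or cycle without net progress, and using the single-file movement guaranteed by the valid-object condition (no width-one tunnels) to ensure that leaders can always find a forward surface node, so that the measure genuinely decreases; the liveness supplied by Lemma~\ref{lem:contract} and Lemma~\ref{lem:complain} is what prevents stalling. Finally, for condition (iii), at a terminal configuration no particle can be expanded---otherwise Lemma~\ref{lem:contract} would yield an eventual contraction, contradicting that nothing is enabled---so all particles are contracted, and since all particles are leaders they all lie on the surface; hence the terminal configuration is in $G$, which completes the proof.
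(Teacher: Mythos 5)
Your overall skeleton (check the three conditions of \emph{solves}; connectivity via Lemma~\ref{lem:forest}; termination via Lemmas~\ref{lem:contract} and~\ref{lem:complain}; all-leaders implies the transition functions eventually map to the empty set) matches the paper, and your connectivity argument and final step are essentially the paper's. However, there is a genuine gap at exactly the point you flag as ``the main obstacle'': stage (b), the claim that once particles are active every follower eventually reaches the surface. You propose to find a monotone progress measure (``number of particles not yet on the surface'' or a lexicographic measure) but never construct it or prove it strictly decreases; showing that some follower off the surface eventually gets onto it is precisely what such a measure would need, so the plan as written is circular, and the informal appeal to ``leaders advance into ever-new surface nodes'' does not resolve it (in fact, under Algorithm~\ref{alg:complainingAlgorithm} a contracted leader that sees no complaint indicator does not move at all, so leaders do not simply march forward).

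The paper closes this gap with a different, concrete idea that you are missing: a \emph{per-follower fixed path} argument rather than a global measure. When a particle $p$ first becomes a follower in a configuration $c$, Lemma~\ref{lem:forest} guarantees a directed path $P = (u_0, u_1, \ldots, u_k)$ in $A(c)$ from the head of $p$ to a node on the surface, because every tree of the forest is rooted at a leader and leaders occupy only surface nodes. The handover rule of Algorithm~\ref{alg:spanningForestAlgorithm} (a follower expanding in a handover sets its direction $d$ to the contraction direction of its successor) forces $p$ to traverse exactly the nodes $u_1, \ldots, u_k$ of this path. Then the first statement of Lemma~\ref{lem:complain} guarantees the system cannot terminate while $p$ is still a follower, and Lemma~\ref{lem:contract} guarantees the expanded particles ahead of $p$ on $P$ keep contracting, so the handovers and expansions $p$ needs all eventually become enabled. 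Since $k$ is finite, $p$ reaches the surface, becomes a leader, and stays there. This reduces the global ``flattening out'' picture to finitely many finite path traversals, one per particle, and is also what later powers the $O(n)$-movements-per-particle bound in Theorem~\ref{thm:work}; without this (or an actually constructed and verified measure), your proof of termination is incomplete.
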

\begin{proof}
	First, we have to show that the algorithm maintains connectivity.
	So consider a system of particles in a configuration during the execution of our algorithm.
	The object is by definition connected.
	A leader always lies on the surface of the object according to Algorithm~\ref{alg:spanningForestAlgorithm}.
	A follower is always part of a tree in the spanning forest as shown in Lemma~\ref{lem:forest}.
	As every tree forms a connected component and is rooted in a leader,
	the set of object particles and active particles forms a connected component.
	Finally, an inactive particle is always part of a connected component of inactive particles
	that includes a particle that is connected to an active particle, again by Lemma~\ref{lem:forest}.
	Therefore, all particles in the system form a single connected component.

	Next, we have to show that the algorithm terminates and that when it does, the system is in a goal configuration.
	A common property of all goal configurations is that all particles lie on the surface.
	In our algorithm, every particle $p$ eventually activates.
	If $p$ initially lies on the surface, it becomes a leader and remains on the surface.
	If $p$ initially does not lie on the surface, it becomes a follower.
	Let $c$ be the first configuration in which $p$ is a follower.
	Consider the directed path in $A(c)$ from the head of $p$ to the first node on the surface.
	There always is such a path since every follower belongs to a tree in $A(c)$ by Lemma~\ref{lem:forest},
	every such tree is rooted in a leader, and a leader only occupies nodes on the surface.
	Let $P = (u_0, u_1, \ldots, u_k)$ be that path where $u_0$ is the head of $p$ and $u_k$ lies on the surface.
	According to Algorithm~\ref{alg:spanningForestAlgorithm}, $p$ attempts to follow $P$
	by sequentially	expanding into the nodes $u_1, \ldots, u_k$.
	By Lemma~\ref{lem:complain}, the algorithm does not terminate before $p$ reaches the surface,
	and according to Lemma~\ref{lem:contract}, $p$ can actually execute all movements required to follow $P$.
	Therefore, $p$ eventually lies on the surface, becomes a leader, and remains on the surface.
	According to Lemma~\ref{lem:complain} this means that for all particles the transition function
	eventually maps to the empty set  which implies termination.
\end{proof}

Finally, we would like to measure how well our algorithm performs in terms of energy consumption.
For this, we consider the number of movements executed in a system until termination
and call this measure \emph{work}.
When we refer to movement in the context of work, we only mean expansions and contraction but not idle movements.
We count a handover as two movements.
We ignore any computation a particle performs since in a physical realization the energy consumption of computation
is most likely negligible compared to the energy consumption of movement.

\begin{lemma}
	\label{lem:worstCase}
	The worst-case work required by any algorithm to solve the Infinite Object Coating problem is $\Omega(n^2)$.
\end{lemma}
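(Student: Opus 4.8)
The plan is to exhibit a single family of initial configurations on which every correct algorithm is forced to perform $\Omega(n^2)$ movements, using only the facts that (i) in any goal configuration all particles are contracted and lie on the surface, and (ii) a single atomic action can move a particle by at most one edge. Since the bound relies only on these two facts, it is automatically algorithm-independent.

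First I would fix the object to be a half-plane. Choosing one of the three lattice axes, group the nodes of $G_{eqt}$ into rows so that every edge joins nodes in the same row or in two consecutive rows, and let the object occupy all rows with index $\le 0$. This object is compact (its complement is a half-plane and hence $2$-connected), so it is a valid object, and the graph distance from any node in row $k \ge 1$ to the object is exactly $k$. On top of this object I would place the $n$ particles as a straight column: one contracted particle in each of the rows $1, 2, \ldots, n$, along a lattice direction that advances by one row per step. Consecutive column nodes are adjacent and the bottom particle is adjacent to the object, so this is a connected configuration in $I$; I will call the particle in row $k$ the particle of \emph{height} $k$.

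Next I would track, for each particle $p$, the distance $h(p)$ from the object to the head of $p$. Initially the particle of height $k$ has $h(p)=k$, while in any goal configuration every particle is contracted and lies on the surface, so its single occupied node is adjacent to the object and $h(p)=1$. The key local observation is that each atomic action changes $h(p)$ by at most one: an expansion moves the head to an adjacent node (changing $h$ by at most one), a contraction leaves the head fixed by the \emph{contract out of the tail} convention (so $h$ is unchanged), and a handover consists of one expansion and one contraction, each governed by the previous two cases. Since the head moves only during expansions and each expansion changes $h(p)$ by at most one, a particle whose head must go from height $k$ to height $1$ must perform at least $k-1$ expansions over the whole execution, regardless of the path it takes.

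Summing over the column, the total number of expansions — and therefore the total work, which counts every expansion and counts a handover as two movements — is at least
\[
    \sum_{k=1}^{n} (k-1) \;=\; \frac{n(n-1)}{2} \;=\; \Omega(n^2).
\]
As this holds for an arbitrary algorithm run on this fixed input, the worst-case work of any algorithm is $\Omega(n^2)$. I expect the only delicate point to be the bookkeeping in the third step: one must verify that the potential $h$ really changes by at most one under every kind of action permitted by the model — in particular that the contraction convention keeps the head fixed and that the two halves of a handover are each covered — and that a net height change of $k-1$ forces at least $k-1$ expansions even when the head is allowed to move back and forth before settling on the surface.
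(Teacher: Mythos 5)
Your construction and counting scheme are essentially the paper's own proof: the paper also takes a half-plane object with the $n$ particles in a straight line perpendicular to the surface, observes that the particle at distance $i$ needs $\Omega(i)$ movements to reach the surface, and sums to get $\Omega(n^2)$. So the approach matches; however, one step of your justification is flawed as stated. You argue that a contraction leaves the head fixed ``by the \emph{contract out of the tail} convention,'' but that convention is part of the paper's particular algorithm (Section~\ref{sec:movingAlongASurface}), not of the model: the lemma quantifies over \emph{any} algorithm, and the model explicitly permits contraction in either of the two directions, so an adversarial algorithm may contract out of the head, which moves the head back to the tail and changes your potential $h(p)$ by one. Consequently your claim that a particle starting at height $k$ must perform at least $k-1$ \emph{expansions} does not follow from the potential argument you gave.

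The gap is easily repaired in either of two ways, and the resulting bound is unchanged. Either count all movements rather than only expansions: every expansion and every contraction changes $h(p)$ by at most one (and work charges each of them, with a handover counted as two), so the particle at height $k$ contributes at least $k-1$ to the total work, and $\sum_{k=1}^{n}(k-1) = \Omega(n^2)$. Or replace $h(p)$ by the minimum graph distance from the object over the nodes currently occupied by $p$; this quantity can decrease only when a new node is occupied, i.e., only upon an expansion, and by at most one per expansion, so the ``at least $k-1$ expansions'' claim is restored without any assumption on how the algorithm contracts. With either fix your argument is complete and algorithm-independent, exactly as the lemma requires.
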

\begin{proof}
	Consider the configuration depicted in Figure~\ref{fig:worstCase}.
	\begin{figure}
	    \centering
	    \includegraphics[scale=0.85]{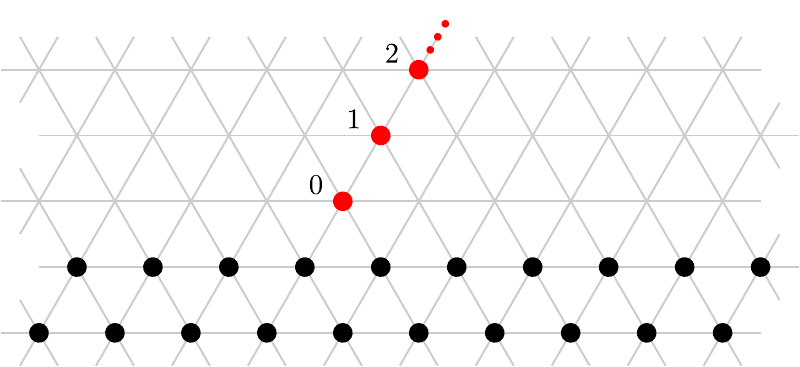}
	    \caption{\small
	    	Worst-case configuration concerning work.
	    	The object particles are shown in black and the non-object particles are shown in red.
	    	The infinite object is a half-plane and the $n$ non-object particles lie on a straight line.
	    }
	    \label{fig:worstCase}
	\end{figure}
	The particle labeled $i$ requires at least $2i$ movements before it lies contracted on the surface.
	Therefore, any algorithm requires at least $\sum_{i = 0}^{n-1} 2i = \Omega(n^2)$ work.
\end{proof}

\begin{theorem}
	Our algorithm requires worst-case optimal work $\Theta(n^2)$.
\label{thm:work}
\end{theorem}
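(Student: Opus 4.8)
The plan is to combine the lower bound already established in Lemma~\ref{lem:worstCase}, which gives $\Omega(n^2)$, with a matching upper bound of $O(n^2)$ on the work of our algorithm; together these yield the claimed $\Theta(n^2)$. To bound the work from above, I would first translate the work measure into a bound on the total distance travelled by the particles. Each particle begins and ends the execution contracted, since both the initial and the goal configurations consist of contracted particles, so over the whole run every particle performs the same number of expansions as contractions, and each such movement — whether carried out in isolation or as part of a handover — shifts the particle by a single node. Hence, if $D_p$ denotes the number of nodes particle $p$ traverses during the execution, the total work is $\Theta\!\left(\sum_p D_p\right)$, and it suffices to show $\sum_p D_p = O(n^2)$.

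The heart of the argument is then to prove that each single particle travels a distance of only $O(n)$, so that summing over the $n$ particles gives $O(n^2)$. I would split a particle's trajectory into its follower phase and its leader phase. During the follower phase, $p$ moves toward the surface along the directed path in the spanning forest $A(c)$ that connects it to a leader on the surface; by Lemma~\ref{lem:forest} such a path always exists, and since the configuration stays connected with only $n$ particles (Theorem~\ref{thm:solve}), its connectivity graph has diameter $O(n)$, so the path has length $O(n)$ and $p$ reaches the surface after $O(n)$ movements. During the leader phase, $p$ moves along the surface in the single common direction $s$ returned by Algorithm~\ref{alg:FaSAlgorithm}, never reversing; its final position lies on the single-file line of the goal configuration, which spans only $O(n)$ nodes of the surface and begins within $O(n)$ of the point at which $p$ first reached the surface. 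Thus the leader phase also contributes only $O(n)$ steps, giving $D_p = O(n)$.

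The main obstacle I anticipate is making these per-particle bounds rigorous in the presence of \emph{moving targets}: a follower chases a successor that is itself advancing, so one must show that the chased path still has total length $O(n)$ rather than growing unboundedly, and a leader must be prevented from being dragged back and forth or from pushing the frontier arbitrarily far in direction $s$. The key facts I would exploit are that the configuration remains connected throughout, so its geometric extent is always $O(n)$; that leaders only ever contract out of their tails and expand in the fixed direction $s$, making their motion along the surface monotone and non-retracing; and that by Lemma~\ref{lem:contract} and Lemma~\ref{lem:complain} the advance of the frontier is driven solely by the $O(n)$ followers and complaints that still need to be resolved. Quantifying this ``total forward progress of the frontier is $O(n)$'' claim, and confirming that it caps the along-surface travel of every leader at $O(n)$, is the crux of the proof; once it is established, combining $\sum_p D_p = O(n^2)$ with Lemma~\ref{lem:worstCase} immediately yields $\Theta(n^2)$.
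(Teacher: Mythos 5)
Your overall skeleton matches the paper's: the lower bound is quoted from Lemma~\ref{lem:worstCase}, and the upper bound is reduced to a per-particle bound of $O(n)$ movements, with the follower phase handled by the $O(n)$ bound on the length of the path to the surface (the paper bounds this path by $2n$, the number of occupied nodes). The gap is in the leader phase, and you have in fact flagged it yourself: you concede that quantifying the ``total forward progress of the frontier is $O(n)$'' claim is the crux, but you never supply that argument, and the route you sketch for it cannot work as stated. Connectivity of the configuration only bounds its \emph{instantaneous} extent to $O(n)$ nodes; it says nothing about the \emph{cumulative} displacement of the particle train over time, since the whole connected blob can drift along the infinite surface. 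Likewise, your premise that a leader's final position lies within $O(n)$ of the point where it first reached the surface is, up to constants, exactly the statement to be proven, so invoking it is circular; and monotonicity of leader motion in the direction $s$ rules out back-and-forth travel but puts no bound on how far forward a leader goes.

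The missing idea is the complaint-counting argument, which is precisely what Algorithm~\ref{alg:complainingAlgorithm} was designed to enable. Under that algorithm, a contracted leader performs an expansion \emph{only} when it reads a complaint indicator, and the indicator is consumed by the movement; moreover, as noted in Section~\ref{sec:complainingAlgorithm}, a consumed indicator cannot reappear at a leader until the particles that forwarded it have moved up, so a leader can see at most $n-1$ complaint indicators over the whole execution. Hence a leader performs at most $n-1$ expansions, each followed by at most one contraction, i.e., $O(n)$ movements in the leader phase; this is the step your proof lacks. Note that without this accounting the leader bound is genuinely unobtainable by geometry alone: under the spanning-forest algorithm without complaints, leaders would keep moving along the infinite surface forever (which is exactly why the complaining primitive was introduced), so any argument for the $O(n)$ leader bound must use the complaint mechanism rather than connectivity or monotonicity.
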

\begin{proof}
	To prove the upper bound, we simply show that every particle executes $O(n)$ movements.
	The theorem then follows by Lemma~\ref{lem:worstCase}.
	Consider a particle $p$.
	While $p$ is inactive, it does not move.
	While $p$ is a follower, it moves along a path to the surface as described in the proof of Theorem~\ref{thm:solve}.
	The length of this path is bound by $2n$ and, therefore, the number of movements $p$ executes while being a follower is $O(n)$.
	While $p$ is a leader it only performs expansions if it reads a complaint indicator. Since a complaint indicator is consumed by an expansion (see Section~\ref{sec:complainingAlgorithm}),
	a leader can see at most $n-1$ indicators. Every expansion is followed by a contraction, 
	therefore the number of movements $p$ executes while being a leader is as well $O(n)$, which concludes the theorem.
\end{proof}

\section{Conclusion}
\label{sec:researchChallenges}
\vspace{-2mm}
In this work we have formally defined the Amoebot model and presented a work-optimal algorithm for the Infinite Object Coating problem under this model.
We want to use the Amoebot model to investigate various other problems in which
 the system of particles forms a single connected component at all times. Other \emph{coating problems} might be considered, in particular when the object surface is finite and the surface of an object has to be coated as uniformly as possible by the particles of a system (possibly with multiple layers of ``coating'').
A second example is the class of \emph{shape formation problems}
in which a system has to arrange to form a specific shape, with or without a seed particle.
Finally, in \emph{bridging problems} particles have to bridge gaps in given structures.
We see the coating as an algorithmic primitive  for solving other problems.
For example, the formation of a shape can be achieved by creating an initially small instance of that shape
which is then iteratively coated to form increasingly large instances
until the number of particles in the system is exhausted. Furthermore, we envision that our spanning forest algorithm (Section~\ref{sec:spanningForestAlgorithm}) may in turn be a building block for other variations of the coating problems.

%Also, variants of the model might be of interest.
% self-stabilization
%Firstly, in a physical realization particles may become faulty
%and a system may not be well-initialized.
%For a system to handle such occurrences it may be necessary to allow
%particles to detect other faulty particles and to cope with them.
%Such a model would require self-stabilizing algorithms.
% three dimensions
%Secondly, the model may be extended from two to three dimensions
% asynchronous
%Lastly, the model could be modified so that particles act asynchronously.
%For more information, including preliminary results on coating problems and more related work
%please see~\url{http://amoebot.cs.upb.de}.

%\newpage
%\thispagestyle{empty}
\bibliographystyle{plain}
\bibliography{literature}

\end{document}